\newcolumntype{P}[1]{>{\centering\arraybackslash}p{#1}} 
\providecommand{\keywords}[1]{\textbf{\textit{Keywords:}} #1} 
\newtheorem{theorem}{Theorem}[section]
\newtheorem{cor}{Corollary}[section]
\newtheorem{lemma}{Lemma}[section]
\newtheorem{prop}{Proposition}[section]
\newcommand{\cube}[3]{
			\fill[color = black!20] (#1,#2,#3) +(0,1,0) -- +(0,1,-1)
 -- +(1,1,-1) -- +(1,1,0) -- cycle;
			\fill[color = black!30] (#1,#2,#3) -- +(0,1,0) -- +(1,1,
0) -- +(1,0,0) -- cycle;
			\fill[color = black!50] (#1,#2,#3) +(1,0,0) -- +(1,1,0) 
-- +(1,1,-1) -- +(1,0,-1) -- cycle;
			\draw[color = black!80] (#1,#2,#3) -- +(0,1) -- +(1,1) -
- +(1,0) -- cycle;
.025,.94,0) -- +(0,1,-1) -- +(1,1,-1) -- +(1,0,-1) -- +(1,0,0);
			\draw[color = black!80] (#1,#2,#3) + (0,1,0) -- +(0,1,-1
) -- +(1,1,-1) -- +(1,0,-1) -- +(1,0,0);
			\draw[color = black!80] (#1,#2,#3) +(1,1,0) -- +(1,1,-1)
;
}
\newcommand{\cell}[2]{
			
			\fill[color = black!30] (#1,#2) -- +(0,1) -- +(1,1) -- +(1,0) -- cycle;
					\draw[color = black!80] (#1,#2) -- +(0,1) -- +(1,1) -- +(1,0) -- cycle;
;
}
\newcommand{\freeza}[3]{
			
			\fill[color = black!30] (#1,#2,#3) -- +(0,1,0) -- +(1,1,0) -- +(1,0,0) -- cycle;
					\draw[color = black!80] (#1,#2,#3) -- +(0,1,0) -- +(1,1,0) -- +(1,0,0) -- cycle;
;
}
\newcommand{\buu}[3]{
			
			\fill[color = black!30] (#1,#2,#3) -- +(0,0,1) -- +(1,0,1) -- +(1,0,0) -- cycle;
					\draw[color = black!80](#1,#2,#3) -- +(0,0,1) -- +(1,0,1) -- +(1,0,0) -- cycle;
;
}
\title{\textbf{Enumeration of parallelogram polycubes}}
\author{Abderrahim Arabi\\
\small USTHB, Faculty of Mathematics\\
\small RECITS Laboratory\\
\small BP 32, El Alia 16111, Bab Ezzouar\\
\small Algiers, Algeria\\
\small\tt arabi.abderrahim@gmail.com\\
\small\tt rarabi@usthb.dz\\
\and
Hacène Belbachir\\
\small USTHB, Faculty of Mathematics\\
\small RECITS Laboratory\\
\small BP 32, El Alia 16111, Bab Ezzouar\\
\small Algiers, Algeria\\
\small\tt hacenebelbachir@gmail.com\\ 
\and
Jean-Philippe Dubernard\\
\small University of Rouen-Normandie, Faculty of Science and Technique, LITIS\\[-0.8ex]
\small Avenue de l’université 76800 Saint-Étienne-du-Rouvray\\[-0.8ex]
\small Rouen, France\\
\small\tt jean-philippe.dubernard@univ-rouen.fr
}
\date{}
\begin{document}

\maketitle

\begin{abstract}
In this paper, we enumerate parallelogram polycubes according to several parameters. After establishing a relation between Multiple Zeta Function and the Dirichlet generating function of parallelogram polyominoes, we generalize it to the case of parallelogram polycubes. We also give an explicit formula and an ordinary generating function  of parallelogram polycubes according to the width, length and depth, by characterizing its projections. Then, these results are generalized to polyhypercubes.  
\end{abstract}

\keywords{Polyominoes, polycubes, enumeration, Dirichlet generating function}

\section{Introduction}\label{Sect0}
In $\mathbb{Z}^2$, a polyomino is a finite connected union of cells without a cut point and defined up to translation \cite{klarner1965some}. An open problem in combinatorics, is finding the number of   
polyominoes for a giving a number $n$ of cells. It is considered as a hard problem and, to date, no exact formula is known. Polyominoes have been enumerated by algorithms and the values are known up to $n=56$ \cite{jensen2003counting}. In the absence of formulas for the general case and to approximate their enumeration, polyominoes with special proprieties were defined and enumerated in the literature. These families of polyominoes have convex, directed or others constraints. We can cite the column convex, the convex, the diagonally convex and the directed. Exact enumerations exist according to the number of cells for some of them and others were enumerated according to additional parameters as the perimeter, the height and the width. One can find a survey in \cite{guttmann2009polygons}. Also methods of enumeration were used as Temperley Methodology \cite{temperley1956combinatorial}, Bousquet Melou Method \cite{bousquet1996method} and ECO Method \cite{doi:10.1080/10236199908808200}. One particular family studied is the family of \textit{Parallelogram Polyominoes}, the polyominoes of this class are columns without holes glued together with the bottoms and the tops of the columns increasing. Exact enumeration exists for it according to different parameters. Let us cite for instance the results of Delest and Viennot\cite{DELEST1984169} according to the perimeter using bijection with Dyck paths, Delest, Dubernard and Dutour \cite{DELEST1995503} according to the area, width, right and left corners and Bousquet-Melou \cite{bousquet1996method} according to the area, width, height and length of the leftmost and rightmost column.      

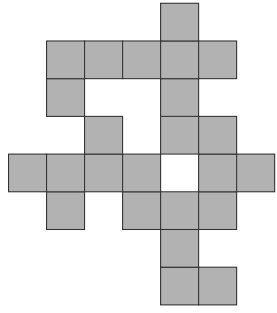
\begin{figure}[H]
\centering
\begin{tikzpicture}[scale=0.5]

	\foreach \x/\y in {1/4,2/3,2/4,2/6,2/7,3/4,3/5,3/7,4/3,4/4,4/7,5/1,5/2,5/3,5/5,5/6,5/7,5/8,6/1,6/3,6/4,6/5,6/7,7/4}{
					\cell{\x}{\y};
				}

\end{tikzpicture}
\label{f1i}
\caption{Example of a polyomino.}
\end{figure}

The extension of polyominoes in dimension $3$ are called polycubes. In $\mathbb{Z}^3$, a unit cell is defined as unit cube. A polycube is a finite face-connected union of elementary cells defined up to translation. As the $2$-dimensional case the enumeration of polycubes with $n$ cells is still an open problem. Many authors have enumerated the first values of polycubes. In $1971$, Lunnon enumerated them up to $7$ \cite{lunnon1971counting}. In $2008$ Aleksandrowicz and Barequet gave the bound up to $18$ \cite{doi:10.1142/S0218195909002927} and the known upper bound is from Luther and Mertens up to $19$,  \cite{luther2011counting}. Unlike polyominoes, only a few classes of polycubes have been enumerated. Let us cite for instance, the plane partitions \cite{cohn1998shape}, the directed plateau polycubes \cite{champarnaud2013enumeration}, and the partially directed snake polycubes \cite{GOUPIL2018223}. Some tools were developed for the enumeration of polycubes, in particular the generic method \cite{jeanne2013generic}, an extension of Bousquet-Melou method \cite{bousquet1996method} and the Dirichlet convolution for the enumeration of polycubes \cite{carre2015enumeration}.\\ 
These methods only enumerate directed polycubes with convex restrictions. The problem for other classes is still open and no method of $2$-dimensional case could have been adapted to the $3$-dimensional case. The class of parallelogram polycubes has been investigated. In \cite{champarnaud2013enumeration}, the authors found a differential equation for the generating function according the volume, width, the area of the rightmost face, the height of the last plateau and the depth of the last plateau, but this equation could not be solved. However, some asymptotic results, that are the only known, are given. Also, the two previous methods did not work for this class.

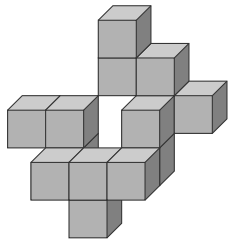
\begin{figure}[h]
\centering
\begin{tikzpicture}[scale=0.5]
	\foreach \x/\y/\z in {1/1/1,1/2/1,-2/2/1,-1/1/1,-1/2/1,2/2/0,0/3/0,1/3/0,0/4/0,0/0/2,-1/1/2,0/1/2,1/1/2						  
						   }{
					\cube{\x}{\y}{\z};
				}					
\end{tikzpicture}
\caption{Example of a polycube.} \label{f2i}
\end{figure}

Our objective in this paper is to enumerate the family of parallelogram polycubes in two different ways by new approaches. The first uses Dirichlet generating function to enumerate them according to the width and volume. It leads us to enumerate parallelogram polyominoes. Then we generalize it to the polycube case. We also show the relation between this generating functions and the Multiple Zeta Function \cite{hoffman1992multiple}. The second enumeration is done according to the width, height and depth. In this case we project the polycubes and using the known results for polyominoes, we deduce an explicit formula for polycubes. In Section \ref{Sect02}, we give some definitions and notations. Then in Section \ref{Sect03}, we enumerate them according to the width and volume. In Section \ref{Sect04}, we explore the parallelogram polycubes according to the width, height and depth. Finally, in the last Section, we generalize some results to any dimension $d\geq 4$.

\section{Preliminaries}\label{Sect02}

Let $(0,\vec{i},\vec{j})$ be an orthonormal coordinate. 
The area of a polyomino is the number of its cells, its width is the number of its columns and its height is the number of its rows.\\
A polyomino is said to be \textit{column-convex} (resp. \textit{row-convex}) if its intersection with any vertical (resp. horizontal) line is connected. 
If it is both column and row convex, it is called \textit{convex} polyomino.\\ 
A North (resp. East) step is a movement of one unit in $\vec{i}$-direction (resp. $\vec{j}$-direction). From this two steps, a \textit{directed} polyomino is defined as if from a distinguished cell called \textit{root}, we can reach any other cell by a path that uses only North or East steps.\\
The \textit{bottom} (resp. \textit{top}) of a column the height of the closet (resp. furthest) cell to the axis  $(0,\vec{i})$.\\
A \textit{parallelogram polyomino} is defined as a convex polyomino such that the bottoms and the tops of its columns form two increasing sequences, see Fig. \ref{f12}.

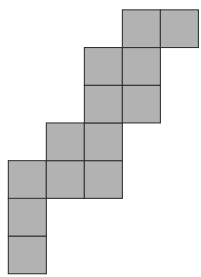
\begin{figure}[H]
\centering
\begin{tikzpicture}[scale=0.5]

	\foreach \x/\y in {0/0,0/1,0/2,1/2,1/3,2/2,2/3,2/4,2/5,3/4,3/5,3/6,4/6}{
					\cell{\x}{\y};
				}

\end{tikzpicture}
\caption{A parallelogram polyomino.}
\label{f12}
\end{figure}

Let $(0,\vec{i},\vec{j},\vec{k})$ be an orthonormal coordinate system. As for polyominoes, several parameters can be defined for a polycube. The \textit{volume} is the number of its cubes.
The \textit{width} (resp. \textit{height}, \textit{depth}) of a polycube is the difference between its greatest and its smallest indices according to $\vec{i}$ (resp. $\vec{j}$, $\vec{k}$). \\
A polycube is said to be \textit{directed} if each of its cells can be reached from a distinguished cell, called the \textit{root}, by a path only made of East (one unit in the $\vec{i}$-direction), North (one unit in $\vec{j}$-direction) and Ahead (one unit in $\vec{k}$-direction) steps.\\
The front (resp. the back) of a plateau as the closest (resp. the furthest) side of the plane $(0,\vec{i},\vec{j})$. And the bottom (resp. the top) of a plateau are defined as the closet (resp. the furthest) side of the plane $(0,\vec{i},\vec{k})$.\\
A \textit{stratum} is a polycube of width $1$ and a \textit{plateau} is a rectangular stratum with no holes, it is the equivalence of a column for a polycube. \\
A \textit{plateau polycube} is a polycube whose strata are plateaus glued together.
A subclass of plateau polycubes are \textit{Parallelogram Polycubes}. A parallelogram polycube is a plateau polycube such that the bottoms, the tops, the fronts and the backs of its plateaus form an increasing sequences, see Fig. \ref{f22}.

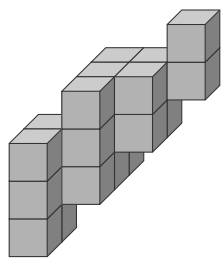
\begin{figure}[H]
\centering
\begin{tikzpicture}[scale=0.5]
	\foreach \x/\y/\z in {0/0/0,0/1/0,0/2/0,0/0/1,0/1/1,0/2/1,
						  1/1/-2,1/2/-2,1/3/-2,1/1/-1,1/2/-1,1/3/-1,1/1/0,1/2/0,1/3/0,
						  2/2/-2,2/3/-2,2/2/-1,2/3/-1,
						  3/3/-2,3/4/-2						  
						   }{
					\cube{\x}{\y}{\z};
				}					
\end{tikzpicture}
\caption{Example of a parallelogram polycube.}
\label{f22}
\end{figure}
\noindent
For more details about polyominoes and polycubes one can see \cite{champarnaud2013enumeration}.
\\
We define $S(X)$ of a finite set $X$ as the set of all the ordered partition of the set $X$. For example if $X=\{a,b,c\}$ then $S(X)=\{(\{a\},\{b\},\{c\}),
(\{a\},\{c\},\{b\}),
(\{b\},\{a\},\{c\}),
(\{b\},\{c\},\{a\}),\\
(\{c\},\{a\},\{b\}),
(\{c\},\{b\},\{a\}),
(\{a,b\},\{c\}),
(\{a,c\},\{b\}),
(\{b,c\},\{a\}),
(\{a\},\{b,c\}),
(\{b\},\{a,c\}),\\
(\{c\},\{a,b\}),
(\{a,b,c\})
\}.$\\
\noindent
Given a sequence $\{a_{n_1,n_2,...,n_k}\}_{n_1,n_2,...,n_k\geq 1}$, the \textit{ordinary generating function} OGF \cite{wilf2013generating} is the formal power series of the form: 
$$F(x_1,x_2,...,x_k)=\sum_{n_1,n_2,...,n_k\geq 1}a_{n_1,n_2,...,n_k}x^{n_1}x^{n_2}\cdots x^{n_k}.$$
For this sequence, the \textit{Dirichlet generating function} DGF \cite{wilf2013generating} is the formal power series of the form:
$$V(x_1,x_2,...,x_k)=\sum_{n_1,n_2,...,n_k\geq 1}\frac{a_{n_1,n_2,...,n_k}}{n_1^{x_1}n_2^{x_2}\cdots n_k^{x_k}}.$$
For $k\geq 1$, the \textit{Multiple Zeta Function} \cite{hoffman1992multiple} is defined as:
$$\zeta_k(x_1,x_2,...,x_k)=\sum_{n_1>n_2>\cdots >n_k\geq 1}\frac{1}{n_1^{x_1}n_2^{x_2}\cdots n_k^{x_k}}.$$
Note that the case $k=1$ is the classic Riemann Zeta Function \cite{wilf2013generating}, defined as 
$$\zeta_1(x)=\sum_{n>1}\frac{1}{n^x}.$$ 

\section{Enumeration of parallelogram polycubes according to the width and volume}\label{Sect03}
In this section, we give an expression of the Dirichlet generating function of parallelogram polyominoes. Then we deduce a relation between this case and the case of polycubes.\\
We start by giving formulas for the number of parallelogram polyominoes according to the width and the area and their extension for polycubes according to the width and volume.

\subsection{Formulas for parallelogram polyominoes and polycbes}   
Let $a_{m_1,m_2,...,m_k}$ be the number of parallelogram polyominoes of width $k$ and whose area (or height) of the $i^{th}$ column is equal $m_i$, where $1 \leq i \leq k$. The following proposition is a consequence from the definition. However, as we use a similar reasoning in the $3$-dimensional case, we give its proof.

\begin{prop}\label{p13}
For integer $k$ and $m_1,m_2,...,m_k\in \mathbb{N}$, we have:
\[a_{m_1,m_2,...,m_k} = \left\{ 
\begin{array}{l l}
  1, & \quad \text{if $k=1,$}\\
  \prod_{j=1}^{k-1}min(m_j,m_{j+1}), & \quad \text{otherwise.}\\ \end{array} \right. \]
\end{prop}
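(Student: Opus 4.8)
The plan is to encode a parallelogram polyomino of width $k$ by the vertical positions of its columns and then count, for fixed column heights $m_1,\dots,m_k$, the number of admissible positionings. Writing $b_i$ for the bottom and $t_i = b_i + m_i - 1$ for the top of the $i$-th column, the shape is completely determined up to translation by the sequence $b_1,\dots,b_k$, since each column is a full interval of cells of known length $m_i$. Because polyominoes are considered up to translation, I would normalize $b_1 = 0$ and describe the remaining freedom through the consecutive shifts $d_i := b_{i+1} - b_i$ for $1 \le i \le k-1$.

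Next I would translate the three defining requirements into inequalities on each $d_i$. The bottoms being (weakly) increasing gives $d_i \ge 0$; the tops being increasing, $t_{i+1} \ge t_i$, rewrites as $d_i \ge m_i - m_{i+1}$; and connectivity of adjacent columns---which must share at least one row, since a polyomino is connected and columns $i$ and $i+1$ can only meet each other directly---gives $b_{i+1} \le t_i$, i.e. $d_i \le m_i - 1$. Hence each $d_i$ ranges over the integer interval $[\max(0,\,m_i - m_{i+1}),\; m_i - 1]$, and crucially these constraints involve only the pair $(m_i, m_{i+1})$ and $d_i$ itself, so the shifts are mutually independent.

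The count is then a short computation: the number of integers in that interval is $(m_i - 1) - \max(0,\,m_i - m_{i+1}) + 1 = m_i - \max(0,\,m_i - m_{i+1})$. A two-line case distinction ($m_i \le m_{i+1}$ versus $m_i > m_{i+1}$) shows this equals $\min(m_i, m_{i+1})$ in both cases. By the independence of the $d_i$, the total number of shapes is $\prod_{i=1}^{k-1}\min(m_i, m_{i+1})$. The case $k=1$ is immediate: a single column of height $m_1$ yields exactly one polyomino up to translation, giving the value $1$.

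I expect the genuinely delicate points to be bookkeeping rather than conceptual: getting the inclusive endpoints (and hence the off-by-one) of the shift interval right, and arguing cleanly that ``increasing'' is meant in the weak sense, so that $[\max(0,\,m_i - m_{i+1}),\, m_i - 1]$ is nonempty and the $\min$ formula holds with equality. The structural observation that makes everything work is that each defining constraint is local, depending only on one consecutive pair; this is what both produces the factor $\min(m_i, m_{i+1})$ per pair and justifies multiplying independently. The same local-shift bookkeeping is what I would reuse in the three-dimensional argument announced in the text.
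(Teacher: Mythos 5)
Your proof is correct and follows essentially the same route as the paper: both arguments reduce the count to the number of admissible relative placements of each consecutive pair of columns and multiply these independent choices. The only difference is one of rigor --- the paper simply asserts that gluing column $i+1$ onto column $i$ admits $\min(m_i,m_{i+1})$ possibilities and points to a figure, whereas you derive that count explicitly from the interval $[\max(0,\,m_i-m_{i+1}),\,m_i-1]$ of admissible shifts, which is a welcome improvement in precision.
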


\begin{proof}
In the case $k=1$, the considered polyominoes (that are of width $1$) are reduced to one column. Thus, for any value of $m_1$, there is only one possible polyomino. To determinate $a_{m_1,...,m_k}$ in the case $k \geq 2$, we have to build all possible corresponding polyominoes. We start by considering a column of height $m_1$ and we successively glue all the other columns one by one. When we add the $i^{th}$ column of height $m_i$ onto the $(i+1)^{th}$ one of height $m_{i+1}$ $(1 \leq i \leq k-1)$, there is exactly $min (m_i, m_{i+1})$ possibilities to obtain a parallelogram polyomino. An illustration of this building is given Fig. \ref{f13}.
\end{proof}

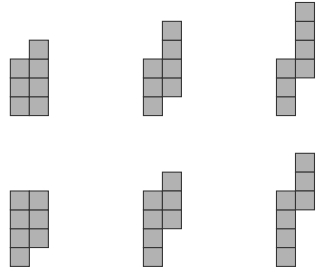
\begin{figure}[H]
\centering
\begin{tikzpicture}[scale=0.25]

	\foreach \x/\y in {0/0,0/1,0/2,1/0,1/1,1/2,1/3,7/0,7/1,7/2,8/1,8/2,8/3,8/4,14/0,14/1,14/2,15/2,15/3,15/4,15/5}{
					\cell{\x}{\y};
				}
				
				\foreach \x/\y in {0/-8,0/-7,0/-6,0/-5,1/-7,1/-6,1/-5,7/-8,7/-7,7/-6,7/-5,8/-4,8/-6,8/-5,14/-8,14/-7,14/-6,14/-5,15/-3,15/-4,15/-5}{
					\cell{\x}{\y};
				
				}
				
\end{tikzpicture}
\caption{Example of construction of different gluing to obtain a parallelogram polyomino.}
\label{f13}
\end{figure}

Note that, summing for all ordered partition of an integer $n$ into $k$ parts, we get $b_{k,n}$ the number of parallelogram polyominoes of width $k$ and area $n$ in the following corollary.

\begin{cor}
For integers $n\geq k \geq 1$,
$$b_{k,n}=\sum_{\substack{m_1,m_2,...,m_k\in \mathbb{N}\\m_1+m_2+...+m_k=n}}a_{m_1,m_2,...,m_k}.$$
 
\end{cor}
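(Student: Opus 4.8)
The plan is to recognize the identity as a straightforward refinement of the count $b_{k,n}$ according to the column heights, so that no genuinely new argument beyond Proposition~\ref{p13} is required. First I would record the elementary but essential fact that every parallelogram polyomino $P$ of width $k$ carries a canonical tuple $(m_1,\dots,m_k)$, where $m_i$ is the number of cells (equivalently the height) of its $i$-th column; each $m_i\geq 1$ since, by definition, a parallelogram polyomino has no empty column. Because the $k$ columns are disjoint and cover $P$, the area of $P$ equals $m_1+\cdots+m_k$. Hence $P$ has area $n$ exactly when $(m_1,\dots,m_k)$ is an ordered partition (composition) of $n$ into $k$ positive parts.

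With this in hand, I would partition the set of parallelogram polyominoes of width $k$ and area $n$ into classes indexed by their height tuple. For a fixed admissible tuple $(m_1,\dots,m_k)$, the corresponding class has cardinality $a_{m_1,\dots,m_k}$ by the very definition of that quantity. The classes are pairwise disjoint, since polyominoes with distinct height tuples are distinct objects, and together they exhaust everything counted by $b_{k,n}$; summing the class sizes therefore yields $b_{k,n}=\sum a_{m_1,\dots,m_k}$, the sum ranging over all compositions of $n$ into $k$ positive parts.

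It remains only to reconcile this with the displayed range $m_1,\dots,m_k\in\mathbb{N}$, which formally also admits tuples with a zero entry. For $k\geq 2$, Proposition~\ref{p13} gives $a_{m_1,\dots,m_k}=\prod_{j=1}^{k-1}\min(m_j,m_{j+1})$, and any tuple with some $m_i=0$ forces a factor $\min(\cdot,0)=0$, so these spurious terms vanish; for $k=1$ the hypothesis $n\geq k$ leaves only the single term $m_1=n\geq 1$. I expect no real obstacle here: the content is purely the observation that height tuples induce a well-defined and injective classification of the counted set, together with the short verification that the zero-height terms contribute nothing.
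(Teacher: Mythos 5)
Your proof is correct and follows the same route the paper intends: the paper offers no formal argument beyond the remark that one sums $a_{m_1,\dots,m_k}$ over all ordered partitions of $n$ into $k$ parts, which is exactly your classification of polyominoes by their column-height tuple. Your extra observation that tuples with a zero entry contribute nothing (via the $\min(\cdot,0)=0$ factor) is a reasonable tidying of the stated index set and does not change the substance.
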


The first values of $b_{k,n}$ are given in the following table.
\begin{table}[H]
\begin{tabular}{|c|P{1cm}P{1cm}P{1cm}P{1cm}P{1cm}P{1cm}P{1cm}P{1cm}P{1cm}P{1cm}|}\hline
$n \backslash k$  & 1 & 2& 3&4&5&6&7&8&9&10 \\ \hline
1&1&0&0&0&0&0&0&0&0&0\\
2&1&1&0&0&0&0&0&0&0&0\\
3&1&2&1&0&0&0&0&0&0&0\\
4&1&4&3&1&0&0&0&0&0&0\\
5&1&6&8&4&1&0&0&0&0&0\\
6&1&9&17&13&5&1&0&0&0&0\\
7&1&12&32&34&19&6&1&0&0&0\\
8&1&16&551&78&58&26&7&1&0&0\\
9&1&20&89&160&154&90&34&8&1&0\\
10&1&25&136&305&365&269&131&43&9&1\\ \hline
\end{tabular}
\caption{The first values of $b_{k,n}$, the number of parallelogram polyominoes having $k$ columns and area $n$.}
\label{t1}

\end{table}

Summing the values of each line we obtain the number of parallelogram polyominoes with $n$ cells, which corresponds to sequence \href{https://oeis.org/A006958}{A006958} of the OEIS \cite{OEIS}.

The Proposition \ref{p13} can be generalized to the $3$-dimensional case by the following way.

\begin{prop}
Let $p_{n_1,n_2,...,n_k}$ be the number of parallelogram polycubes of width $k$ and whose $i^{th}$ plateau has a volume $n_i$ with $i=1,...,k$.
$$p_{n_1,n_2,...,n_k}=\sum_{v_1|n_1,...,v_k|n_k}\prod_{i=1}^{k-1}min(v_i,v_{i+1})min(\frac{n_i}{v_i},\frac{n_{i+1}}{v_{i+1}}).$$

\end{prop}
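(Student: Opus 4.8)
The plan is to reduce the three-dimensional count to two independent copies of the two-dimensional argument used in Proposition \ref{p13}. First I would observe that a plateau is, by definition, a rectangular stratum, hence a rectangle in the $(\vec{j},\vec{k})$-plane; a plateau of volume $n_i$ is therefore determined by its extent $v_i$ along $\vec{j}$ (its height) together with its extent along $\vec{k}$ (its depth), and since the two extents multiply to $n_i$ the depth must equal $n_i/v_i$. This forces $v_i$ to be a divisor of $n_i$. Consequently, choosing the shape of the $i$-th plateau amounts exactly to choosing a divisor $v_i\mid n_i$, and the outer sum over $v_1\mid n_1,\dots,v_k\mid n_k$ in the statement is precisely the sum over all admissible tuples of plateau shapes.

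Next, with the shapes $v_1,\dots,v_k$ fixed, I would count the parallelogram polycubes having exactly these plateaus, building the polycube by placing plateau $1$ and then gluing plateaus $2,\dots,k$ one at a time along $\vec{i}$, as in the polyomino case. The crucial observation is that the defining monotonicity conditions decouple across the two transverse axes: the requirement that the bottoms and tops form increasing sequences is a condition purely on the $\vec{j}$-coordinates, whereas the requirement that the fronts and backs form increasing sequences is a condition purely on the $\vec{k}$-coordinates. Moreover, two consecutive plateaus are face-connected exactly when their rectangular cross-sections overlap, that is, when they overlap both in the $\vec{j}$-projection and in the $\vec{k}$-projection.

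I would then invoke the one-dimensional count established in the proof of Proposition \ref{p13}, applied separately in each direction. Positioning plateau $i+1$ relative to plateau $i$ along $\vec{j}$, subject to increasing bottoms and tops and to $\vec{j}$-overlap, is identical to the column-gluing problem for columns of heights $v_i$ and $v_{i+1}$, and hence admits $\min(v_i,v_{i+1})$ placements; likewise the positioning along $\vec{k}$, subject to increasing fronts and backs and to $\vec{k}$-overlap, admits $\min(n_i/v_i,\,n_{i+1}/v_{i+1})$ placements. Since the $\vec{j}$- and $\vec{k}$-positions are chosen independently, and since face-connectivity is exactly the conjunction of the two overlaps, the number of admissible gluings of plateau $i+1$ onto plateau $i$ is the product $\min(v_i,v_{i+1})\,\min(n_i/v_i,\,n_{i+1}/v_{i+1})$. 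Because each gluing depends only on the relative position of two consecutive plateaus, the successive choices are mutually independent, so for fixed shapes the number of polycubes is the product of these factors over $i=1,\dots,k-1$. Summing over all divisor choices then yields the claimed formula.

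The main obstacle I anticipate is justifying this clean decoupling rigorously: I must verify that imposing the four monotonicity conditions together with $\vec{j}$-overlap and $\vec{k}$-overlap is equivalent to the defining property of a parallelogram polycube, with no hidden interaction between the two axes and no global connectivity obstruction beyond the pairwise overlap of consecutive plateaus. I would also confirm that this data (shapes plus relative positions) determines the polycube up to translation uniquely, so that the construction is a genuine bijection and nothing is over- or under-counted. Once this independence and bijectivity are secured, the factorisation of the per-gluing count, and hence the entire formula, follows directly from the two-dimensional result.
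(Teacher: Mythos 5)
Your proposal is correct and follows essentially the same route as the paper's proof: parametrise each plateau by a divisor $v_i\mid n_i$ (height $v_i$, depth $n_i/v_i$), count the gluings of consecutive plateaus as the product $\min(v_i,v_{i+1})\min(n_i/v_i,n_{i+1}/v_{i+1})$ by applying the two-dimensional argument of Proposition \ref{p13} independently in the $\vec{j}$- and $\vec{k}$-directions, and sum over all divisor choices. Your discussion of the decoupling and bijectivity is a more explicit justification of steps the paper treats as immediate, but the underlying argument is the same.
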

\begin{proof}
First for a fixed volume of a plateau $n_i$, there are exactly $\tau(n_i)$ possible plateaus, where $\tau(n_i)$ is the number of divisor of $n_i$, an exemple is shown in Fig. \ref{f23bis}. Giving a plateau of volume $n_i$ and height $v_i$, we deduce that its depth is $\frac{n_i}{v_i}$. To glue two plateaus the same way as for polyominoes, the difference is that for the $2$-dimensional case we glued according to the height but in the case of polycubes we do it according to the height and depth. Therefore for a plateau $i$ of volume $n_i$ and height $v_i$ and a plateau $i+1$ of volume $n_{i+1}$ and height $v_{i+1}$, we have exactly $min(v_i,v_{i+1})min(\frac{n_i}{v_i},\frac{n_{i+1}}{v_{i+1}})$ gluing. An example is shown in Fig. \ref{f23}. Morevover by summing for all possible heights of each plateau, we get the formula.
\end{proof}
\begin{figure}[H]
\centering
\begin{tikzpicture}[scale=0.25]
	\foreach \x/\y/\z in {0/0/0,0/1/0,0/2/0,0/3/0,0/4/0,0/5/0,
	11/0/0,11/1/0,11/2/0,11/0/1,11/1/1,11/2/1,22/0/0,22/1/0,
	22/0/1,22/1/1,22/0/2,22/1/2,33/0/0,33/0/1,33/0/2,33/0/3,33/0/4,33/0/5
	}{
					\cube{\x}{\y}{\z};
				}

\end{tikzpicture}
\caption{All plateaus of volume $6$.}
\label{f23bis}
\end{figure}
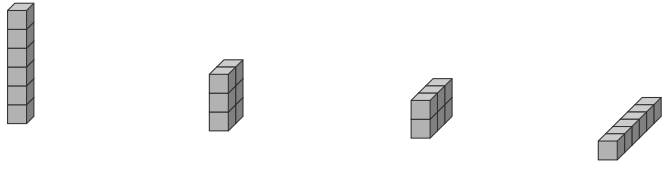

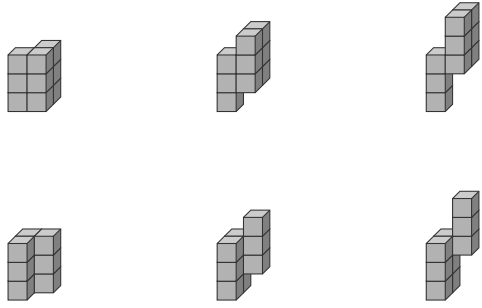
\begin{figure}[H]
\centering
\begin{tikzpicture}[scale=0.25]
	\foreach \x/\y/\z in {0/0/0,0/1/0,0/2/0,0/0/1,0/1/1,0/2/1,1/0/0,1/1/0,1/2/0,
	11/0/0,11/1/0,11/2/0,11/0/1,11/1/1,11/2/1,12/1/0,12/2/0,12/3/0,
	22/0/0,22/1/0,22/2/0,22/0/1,22/1/1,22/2/1,23/2/0,23/3/0,23/4/0,
	0/10/1,0/11/1,0/12/1,1/10/0,1/11/0,1/12/0,1/10/1,1/11/1,1/12/1,
	11/10/1,11/11/1,11/12/1,12/11/0,12/12/0,12/13/0,12/11/1,12/12/1,12/13/1,
	22/10/1,22/11/1,22/12/1,23/12/0,23/13/0,23/14/0,23/12/1,23/13/1,23/14/1}{
					\cube{\x}{\y}{\z};
				}

\end{tikzpicture}
\caption{Example of gluing two plateaus.}
\label{f23}
\end{figure}

Also, by summing for all partition of an integer $n$ into $k$ parts we get $c_{k,n}$ the number of parallelogram polycubes of width $k$ and volume $n$ in the following corollary.

\begin{cor}\label{cor1}
$$c_{k,n}=\sum_{\substack{n_1,n_2,...,n_k\in \mathbb{N}\\n_1+n_2+...+n_k=n}}p_{n_1,n_2,...,n_k}.$$
\end{cor}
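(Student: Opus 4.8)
The plan is to establish this identity by a direct disjoint-union argument, entirely analogous to the one that produces $b_{k,n}$ for parallelogram polyominoes in the corollary preceding Table \ref{t1}. The starting point is structural: a parallelogram polycube of width $k$ is by definition built from exactly $k$ plateaus, one sitting at each of the $k$ consecutive width-indices along the $\vec{i}$-direction. Since distinct strata lie in distinct layers orthogonal to $\vec{i}$, they are pairwise disjoint, so the total volume is simply the sum $n_1+n_2+\cdots+n_k$ of the volumes of the individual plateaus. Moreover each plateau is a nonempty rectangle, so every $n_i$ is a positive integer.

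First I would introduce the map $\Phi$ sending each parallelogram polycube of width $k$ to the ordered tuple $(n_1,n_2,\ldots,n_k)$ recording the volume of its $i$-th plateau, read from left to right. The fibers of $\Phi$ partition the set of parallelogram polycubes of width $k$ and volume $n$ into classes indexed by the tuples with $n_i\geq 1$ and $n_1+\cdots+n_k=n$, that is, by the ordered partitions of $n$ into $k$ parts. Because the plateaus carry an intrinsic left-to-right order, it is these ordered partitions and not unordered ones that index the classes, which is exactly the range of summation appearing in the statement.

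Next, by the preceding Proposition, the cardinality of the fiber $\Phi^{-1}(n_1,\ldots,n_k)$ is precisely $p_{n_1,n_2,\ldots,n_k}$. Summing these cardinalities over all ordered partitions of $n$ into $k$ parts recovers the total count $c_{k,n}$, giving the claimed formula. Since the whole argument is a bookkeeping of a disjoint-union decomposition, I expect no substantial obstacle; the only two points needing care are the additivity of volume across strata, which is guaranteed by their disjointness along $\vec{i}$, and the distinction between ordered and unordered partitions, where the natural left-to-right ordering of the plateaus forces the use of ordered partitions.
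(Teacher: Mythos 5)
Your argument is correct and is essentially the same as the paper's, which treats the corollary as an immediate consequence of the preceding proposition by summing over all ordered compositions of $n$ into $k$ positive parts. The paper gives no further detail; your fiber/disjoint-union bookkeeping just makes that one-line observation explicit.
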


The computing of the first values $n_1,n_2,...,n_k$ of $c_{k,n}$ the number of parallelogram polycubes of width $k$ and volume $n$ are given in the following table.

\begin{table}[H]
\begin{tabular}{|c|P{1cm}P{1cm}P{1cm}P{1cm}P{1cm}P{1cm}P{1cm}P{1cm}P{1cm}P{1cm}|}\hline
$n \backslash k$  & 1 & 2& 3&4&5&6&7&8&9&10 \\ \hline
1&1&0&0&0&0&0&0&0&0&0\\
2&2&1&0&0&0&0&0&0&0&0\\
3&2&4&1&0&0&0&0&0&0&0\\
4&3&10&6&1&0&0&0&0&0&0\\
5&2&18&22&8&1&0&0&0&0&0\\
6&4&32&59&38&10&1&0&0&0&0\\
7&2&44&132&132&58&12&1&0&0&0\\
8&4&70&264&374&245&82&14&1&0&0\\
9&3&84&469&916&836&406&110&16&1&0\\
10&4&126&808&2015&2438&1614&623&142&18&1\\ \hline
\end{tabular}
\caption{The first values of $c_{k,n}$, the number of parallelogram polycubes of width $k$ and area $n$.}
\label{t2}

\end{table}

The values of the diagonals corresponds to the results found experimentally in \cite{champarnaud2013enumeration}.

\subsection{Dirichlet generating function}

Let $V_{k}(x_1,x_2,...,x_k)$ be the Dirichlet generating function of the parallelogram polyominoes of width $k$, where $x_i$ codes the area of the $i^{th}$ column where $1\leq i \leq k$, it is defined as following

$$V_{k}(x_1,x_2,...,x_k):=\sum_{n_1,n_2,...,n_k\geq 1}\frac{min(m_1,m_2)\cdots min(n_{k-1},n_k)}{n_1^{x_1}n_2^{x_2}\cdots n_k^{x_k}}$$ 
Before giving an algorithm to how express this generating function in terms of Multiple Zeta Function, we give the formulas for $k=1$, $k=2$,$k=3$ and $k=4$.

\begin{itemize}
\item For $k=1$,
$$V_1(x_1)=\sum_{n_1\geq 1}\frac{1}{n_1^{x_1}}=\zeta_1(x_1).$$
\item For $k=2$, 
\begin{align*}
V_2(x_1,x_2)=&\sum_{n_1,n_2\geq 1}\frac{min(n_1,n_2)}{n_1^{x_1}n_2^{x_2}}\\
=&\sum_{n_1>n_2\geq 1}\frac{1}{n_1^{x_1}n_2^{x_2-1}}+\sum_{n_2>n_1\geq 1}\frac{1}{n_1^{x_1-1}n_2^{x_2}}+\sum_{n_1=n_2\geq 1} \frac{1}{n_1^{x_1+x_2-1}}.
\end{align*}
So $$V_2(x_1,x_2)=\zeta_2(x_1,x_2-1)+\zeta_2(x_2,x_1-1)+\zeta_1(x_1+x_2-1).$$
\item For $k=3$
\begin{align*}
V_3(x_1,x_2,x_3)=&\sum_{n_1,n_2,n_3\geq 1}\frac{min(n_1,n_2)min(n_2,n_3)}{n_1^{x_1}n_2^{x_2}n_3^{x_3}}\\
=&\sum_{n_1>n_2>n_3\geq 1}\frac{1}{n_1^{x_1}n_2^{x_2-1}n_3^{x_3-1}}+\sum_{n_1>n_3>n_2\geq 1}\frac{1}{n_1^{x_1}n_2^{x_2-2}n_3^{x_3}}\\
+&\sum_{n_2>n_1>n_3\geq 1}\frac{1}{n_1^{x_1-1}n_2^{x_2}n_3^{x_3-1}}+\sum_{n_2>n_3>n_1\geq 1}\frac{1}{n_1^{x_1-1}n_2^{x_2}n_3^{x_3-1}}\\
+&\sum_{n_3>n_1>n_2\geq 1}\frac{1}{n_1^{x_1-1}n_2^{x_2-1}n_3^{x_3}}+\sum_{n_3>n_2>n_1\geq 1}\frac{1}{n_1^{x_1-1}n_2^{x_2-1}n_3^{x_3}}\\
+&\sum_{n_1=n_2>n_3\geq 1}\frac{1}{n_1^{x_1+x_2-1}n_3^{x_3-1}}+\sum_{n_1=n_3>n_2\geq 1}\frac{1}{n_1^{x_1+x_3}n_2^{x_2-2}}\\
+&\sum_{n_2=n_3>n_1\geq 1}\frac{1}{n_1^{x_1-1}n_2^{x_2+x_3-1}}+\sum_{n_3>n_1=n_2\geq 1}\frac{1}{n_1^{x_1+x_2-2}n_3^{x_3}}\\
+&\sum_{n_2>n_1=n_3\geq 1}\frac{1}{n_1^{x_1+x_3-2}n_2^{x_2}}+\sum_{n_1>n_2=n_3\geq 1}\frac{1}{n_1^{x_1}n_2^{x_2+x_3-2}}\\
+&\sum_{n_1=n_2=n_3\geq 1}\frac{1}{n_1^{x_1+x_2+x^3-2}}.
\end{align*}

So 
\begin{align*}
V_3(x_1,x_2,x_3)=&\zeta_3(x_1,x_2-1,x_3-1)+\zeta_3(x_1,x_3,x_2-2)+\zeta_3(x_2,x_1-1,x_3-1)\\
+&\zeta_3(x_2,x_3-1,x_1-1)+\zeta_3(x_3,x_1,x_2-2)+\zeta_3(x_3,x_2-1,x_1-1)\\
+&\zeta_2(x_1+x_2-1,x_3-1)+\zeta_2(x_1+x_2,x_3-2)+\zeta_2(x_2+x_3-1,x_1-1) \\
+& \zeta_2(x_3,x_1+x_2-2)+\zeta_2(x_2,x_1+x_2-2)+\zeta_2(x_1,x_2+x_3-2)\\
+&\zeta_1(x_1+x_2+x_3-2). 
\end{align*}

\item For $k=4$

\begin{align*}
V_4(x_1,x_2,x_3,x_4)=&\zeta_4(x_1,x_2-1,x_3-1,x_4-1)+\zeta_4(x_1,x_2-1,x_4,x_3-2)\\
+&\zeta_4(x_1,x_3,x_2-2,x_4-1)+\zeta_4(x_1,x_3,x_4-1,x_2-2)\\
+&\zeta_4(x_1,x_4,x_2-1,x_3-2)+\zeta_4(x_1,x_4,x_3-1,x_2-2)\\
+&\zeta_4(x_2,x_1-1,x_3-1,x_4-1)+\zeta_4(x_2,x_1-1,x_4,x_3-2)\\
+&\zeta_4(x_2,x_3-1,x_1-1,x_4-1)+\zeta_4(x_2-1,x_3-1,x_4-1,x_1-1)\\
+&\zeta_4(x_2,x_4,x_1-1,x_3-2)+\zeta_4(x_2,x_4,x_3-2,x_1-1)\\
+&\zeta_4(x_3,x_2-1,x_1-1,x_4-1)+\zeta_4(x_3,x_2-1,x_4-1,x_1-1)\\
+&\zeta_4(x_3,x_1,x_2-2,x_4-1)+\zeta_4(x_3,x_1,x_4-1,x_2-2)\\
+&\zeta_4(x_3,x_4-1,x_2-1,x_1-1)+\zeta_4(x_3,x_4-1,x_1,x_2-2)\\
+&\zeta_4(x_4,x_2,x_3-2,x_1-1)+\zeta_4(x_4,x_2,x_1-1,x_3-2)\\
+&\zeta_4(x_4,x_3-1,x_2-1,x_1-1)+\zeta_4(x_4,x_3-1,x_1,x_2-2)\\
+&\zeta_4(x_4,x_1,x_2-1,x_3-2)+\zeta_4(x_4,x_1,x_3-1,x_2-2)\\
+&\zeta_3(x_1+x_2-1,x_3-1,x_4-1)+\zeta_3(x_1+x_2-1,x_4,x_3-2)\\
+&\zeta_3(x_3,x_1+x_2-2,x_4-1)+\zeta_3(x_4,x_1+x_2-1,x_3-2)\\
+&\zeta_3(x_3,x_4-1,x_1+x_2-2)+\zeta_3(x_4,x_3-1,x_1+x_2-2)\\
+&\zeta_3(x_1+x_3,x_2-2,x_4-1)+\zeta_3(x_1+x_3,x_4-1,x_2-2)\\
+&\zeta_3(x_2,x_1+x_3-2,x_4-1)+\zeta_3(x_4,x_1+x_3-1,x_2-2)\\
+&\zeta_3(x_2,x_4,x_1+x_3-3)+\zeta_3(x_4,x_2,x_1+x_3-3)\\
+&\zeta_3(x_1+x_4,x_2-1,x_3-2)+\zeta_3(x_1+x_4,x_3-1,x_2-2)\\
+&\zeta_3(x_2,x_1+x_4-1,x_3-2)+\zeta_3(x_3,x_1+x_4-1,x_2-2)\\
+&\zeta_3(x_2,x_3-1,x_1+x_4-2)+\zeta_3(x_3,x_2-1,x_1+x_4-2)\\
+&\zeta_3(x_2+x_3-1,x_1-1,x_4-1)+\zeta_3(x_2+x_3-1,x_4-1,x_1-1)\\
+&\zeta_3(x_1,x_2+x_3-2,x_4-1)+\zeta_3(x_4,x_2+x_3-2,x_1-1)\\
+&\zeta_3(x_1,x_4,x_2+x_3-3)+\zeta_3(x_4,x_1,x_2+x_3-3)\\
+&\zeta_3(x_2+x_4,x_1-1,x_3-2)+\zeta_3(x_2+x_4,x_3-2,x_1-1)\\
+&\zeta_3(x_1,x_2+x_4-1,x_3-2)+\zeta_3(x_3,x_2+x_4-2,x_1-1)\\
+&\zeta_3(x_1,x_3,x_2+x_4-3)+\zeta_3(x_3,x_1,x_2+x_4-3)\\
+&\zeta_3(x_3+x_4-1,x_1,x_2-2)+\zeta_3(x_3+x_4-1,x_2-1,x_1-1)\\
+&\zeta_3(x_1,x_3+x_4-1,x_2-2)+\zeta_3(x_2,x_3+x_4-2,x_1-1)\\
+&\zeta_3(x_1,x_2-1,x_3+x_4-2)+\zeta_3(x_2,x_1-1,x_3+x_4-2)\\
+&\zeta_2(x_1+x_2-1,x_3+x_4-2)+\zeta_2(x_3+x_4-1,x_1+x_2-2)\\
+&\zeta_2(x_1+x_3,x_2+x_4-3)+\zeta_2(x_2+x_4,x_1+x_3-3)\\
+&\zeta_2(x_1+x_4,x_2+x_3-3)+\zeta_2(x_2+x_3-1,x_1+x_4-2)\\
+&\zeta_2(x_1+x_2+x_3-2,x_4-1)+\zeta_2(x_4,x_1+x_2+x_3-3)\\
+&\zeta_2(x_1+x_2+x_4-1,x_3-2)+\zeta_2(x_3,x_1+x_2+x_4-3)\\
+&\zeta_2(x_1+x_3+x_4-1,x_2-2)+\zeta_2(x_2,x_1+x_3+x_4-3)\\
+&\zeta_2(x_2+x_3+x_4-2,x_1-1)+\zeta_2(x_1,x_2+x_3+x_4-3)\\
+&\zeta_1(x_1+x_2+x_3+x_4-3). 
\end{align*}

\end{itemize}

Let $X_k$ be the set of variables
$$X_k=\{x_1,x_2,...,x_k\}.$$
For $k\geq 2$, we have the following theorem

\begin{theorem}\label{thp}
For $k\geq 1$,
$$V_k(x_1,x_2,...,x_k)=\sum_{\substack{S\in S(X)\\l=card(S)}}\zeta_{l}(e_1,e_2,...,e_l),$$
where $S=(S_1,S_2,...,S_l)$ and , for $1\leq i \leq l$, $e_i=\sum\limits_{\substack{j=1\\x_j\in S_i}}^{k}x_j-f_i$, with
$f_i=\sum_{j=1}^{k}f_{i,j}^{+}+f_{i,j}^{-}$, 

\[f_{i,j}^{+} = \left\{ 
\begin{array}{l l}
  1, & \quad \text{if $x_{j+1}\in S_t$, $1\leq t\leq i-1$}\\
  0, & \quad \text{otherwise.}\\ \end{array} \right. \]
and
\[f_{i,j}^{-} = \left\{ 
\begin{array}{l l}
  1, & \quad \text{if $x_{j-1}\in S_t$, $1\leq t\leq i$}\\
  0, & \quad \text{otherwise.}\\ \end{array} \right. \]
\end{theorem}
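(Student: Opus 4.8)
The plan is to stratify the $k$-fold sum defining $V_k$ according to the relative order of the summation variables $n_1,\ldots,n_k$, to recognize each stratum as a single multiple zeta value, and then to check that the bookkeeping of the minima in the numerator reproduces exactly the exponents $e_i$ of the statement. The base case $k=1$ is immediate, since there is no minimum factor and the single block gives $\zeta_1(x_1)$, so I would concentrate on $k\geq 2$.

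First I would note that any tuple $(n_1,\ldots,n_k)\in\mathbb{N}^k$ induces a weak order on the indices $\{1,\ldots,k\}$: indices sharing a common value are grouped together, and the groups are listed by strictly decreasing value. This datum is exactly an ordered set partition $S=(S_1,\ldots,S_l)\in S(X)$, with $S_1$ carrying the largest value and $S_l$ the smallest. Restricting the defining sum of $V_k$ to the tuples realizing a fixed $S$ amounts to summing over $l$ strictly decreasing integers $m_1>m_2>\cdots>m_l\geq 1$, where $m_i$ is the common value of the variables indexed by $S_i$. Hence $V_k=\sum_{S\in S(X)}W_S$, and the theorem reduces to proving $W_S=\zeta_l(e_1,\ldots,e_l)$ for each $S$.

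Next, on a fixed stratum $S$ I would evaluate the numerator $\prod_{j=1}^{k-1}\min(n_j,n_{j+1})$, whose coefficient meaning comes from Proposition~\ref{p13}. Each factor $\min(n_j,n_{j+1})$ equals $m_t$, where $S_t$ is the block containing whichever of $j,j+1$ carries the smaller (or, in case of a tie, the common) value; it therefore contributes one extra power of $m_t$, that is, it lowers the exponent of $m_t$ by one. Gathering the denominator $\prod_i m_i^{\sum_{x_j\in S_i}x_j}$ together with these contributions, the summand of $W_S$ equals $\prod_{i=1}^{l}m_i^{-e_i}$ with $e_i=\big(\sum_{x_j\in S_i}x_j\big)-d_i$, where $d_i$ is the number of minima factors that lower the exponent of $m_i$. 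Summing $\prod_i m_i^{-e_i}$ over $m_1>\cdots>m_l\geq 1$ is, by definition, $\zeta_l(e_1,\ldots,e_l)$, so it only remains to identify $d_i$ with $f_i$.

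The heart of the argument, and the step I expect to be the most delicate, is this last identification. I would attribute each of the $k-1$ factors to the block it decrements and count it from inside that block via one of its two endpoints. The factor $\min(n_j,n_{j+1})$ decrements the block of $j$ precisely when $x_{j+1}$ lies in a strictly earlier block, which is recorded by $f_{i,j}^{+}$ (threshold $t\leq i-1$); symmetrically, the factor $\min(n_{j-1},n_j)$ decrements the block of $j$ precisely when $x_{j-1}$ lies in an earlier \emph{or equal} block, recorded by $f_{i,j}^{-}$ (threshold $t\leq i$). A short case analysis over the three configurations of a consecutive pair, namely both endpoints in one block or in two distinct blocks in either order, shows that the deliberate mismatch of thresholds ($i-1$ versus $i$) makes a within-block factor be counted exactly once, through its left endpoint, and never double-counts a factor straddling two blocks. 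Consequently $d_i=\sum_{x_j\in S_i}\big(f_{i,j}^{+}+f_{i,j}^{-}\big)=f_i$, which gives $W_S=\zeta_l(e_1,\ldots,e_l)$; summing over all $S\in S(X)$ then yields the asserted formula.
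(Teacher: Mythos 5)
Your proposal is correct and follows essentially the same route as the paper: stratify the sum over $(n_1,\ldots,n_k)$ by the weak order it induces (an ordered set partition, whence the Fubini count), recognize each stratum as a multiple zeta value, and track how each factor $\min(n_j,n_{j+1})$ decrements one exponent, with the $t\leq i-1$ versus $t\leq i$ thresholds ensuring each factor is charged exactly once. Your case analysis of that last bookkeeping step is in fact spelled out more carefully than in the paper's own proof (only a trivial slip: a within-block factor is charged via $f^{-}$ at its right endpoint, not its left, which changes nothing since both endpoints lie in the same block).
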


\begin{proof}
In a first time, let us decompose the sum to make appear all the $n_i's$ possible orders, using the relations $>$ and $=$. Note that the number of these decompositions is equal to Fubini numbers which corresponds to the sequence \href{https://oeis.org/A000670}{A000670} of the OEIS \cite{OEIS}.\\
Then to each decomposition we associate $S\in S(X)$.\\
For $1\leq i \leq k-1$ and $1 \leq r,t\leq l$, let $x_i\in S_r$ and $x_{i+1}\in S_t$ with $r>t$ (resp. $t<r$). This means that for the associated $n_i's$, $n_i>n_{i+1}$ (resp. $n_i<n_{i+1}$), so $min(n_i,n_{i+1})=n_i$ (resp. $min(n_i,n_{i+1})=n_{i+1}$). It implies that simplifying the fraction, we can decrease the power of $n_i$ (resp. $n_{i+1}$). Thus, in the denominator of the fraction we get $n_i^{x_i}n_{i+1}^{{x_{i+1}-1}}$ (resp. $n_i^{x_i-1}n_{i+1}^{x_{i+1}}$). So the variable in the Zeta function $e_r=x_i$ and $e_t=x_{i+1}-1$ (resp. $e_r=x_i-1$ and $e_t=x_{i+1}$).\\
If $x_i,x_{i+1}\in S_r$, then
$n_i=n_{i+1}$. Using the same reasoning, we get in the Zeta Function $e_r=x_i+x_{i+1}-1$.\\
Applying the same reasoning for $x_i$ and $x_{i-1}$, we add $-1$ to the variable $e_t$ (resp. $e_{r}$). The variable $f_{i,t}$ is therefore defined to count $-1's$ for each $x_i$ in $S_t$. Finally if the $S_t$ contains more than one element, $e_t$ is the sum of its variables and the sum of their $f_{i,t}$.

\end{proof}

\noindent
The Algorithm \ref{algo} gives the Dirichlet generating function for a given width $k$.\\
Note that:
\begin{itemize}
\item $Orderedpartition(X)$ gives all the ordered partitions of the set $X$.
\item $Index()$ gives the index of an element, for example, $Index(x_i)=i$. 
\item $Append()$ add an element to a vector, example $(1,2,3).Append(4)=(1,2,3,4)$.
\end{itemize}

\begin{algorithm}
\caption{The Dirichlet generating function of parallelogram polycubes}
\label{algo}
        \SetKwInOut{Input}{input}
        \SetKwInOut{Output}{output}
        \Input{$k$: Width of the polyominoes}
        \Output{$V_k$: Dirichlet generating function of parallelogram polyominoes of with $k$.}
\Begin{
$X=\{\}$\\
$V_k=0$\\
\For{$i\gets 1$ \KwTo $k$}{
$X \gets X\cup \{x_i\}$}
$P \gets Orderedpartition(X)$\\
\For{$t\gets 1$ \KwTo $Card(P)$}{
$S\gets P[t]$\\
$\zeta\gets [ ]$\\
\For{$i\gets 1$ \KwTo $Card(S)$}{
$e_i \gets 0$\\
\For{$j\gets 1$ \KwTo $Card(S[i])$}{
$m \gets Index(S[i,j])$\\
$f1 \gets 0$\\
$f2 \gets 0$\\
$f \gets 0$\\
\For{$r\gets 1$ \KwTo $i-1$}{
\If{$x_{m-1}\in S_r$}{
$f1 \gets f1+1$
}
}
\For{$r\gets 1$ \KwTo $i$}
{\If{$x_{m+1}\in S_r$}{
$f2 \gets f2+1$
}
}
$f\gets f1+f2$\\
$e_i \gets e_i+S[i,j]-f$
}
$\zeta.Append(e_i)$
}
$V_k \gets V_k+\zeta$
}
}
    \end{algorithm}


Let $P_k(x_1,x_2,...,x_k)$ be the Dirichlet generating function of parallelogram polycubes of width $k$ and where $x_i$ codes the volume of the $i^{th}$ plateau, with $1\leq i \leq k$.
$$P_k(x_1,x_2,...,x_k):=\sum_{n_1,n_2,...,n_k\geq 1}\frac{\sum_{v_1|n_1,...,v_k|n_k}\prod_{j=1}^{k-1}min(v_j,v_{j+1})min(\frac{n_j}{v_j},\frac{n_{j+1}}{v_{j+1}})}{n_1^{x_1}n_2^{x_2}\cdots n_k^{x_k}}.$$

This generating function can be expressed according to $V_k(x_1,x_2,...,x_k)$ in the following theorem.

\begin{theorem}\label{thPolD}
For $k\geq 1$,
$$P_k(x_1,x_2,...,x_k)=\big(V_k(x_1,x_2,...,x_k)\big)^2$$
\end{theorem}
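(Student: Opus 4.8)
The plan is to read Theorem \ref{thPolD} as a formal identity of $k$-variable Dirichlet series and to prove it by comparing coefficients. The key fact I would use is that the product of two such series is again a Dirichlet series of the same shape, whose coefficient at $(n_1,\dots,n_k)$ is the coordinate-wise Dirichlet convolution of the two coefficient families.

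First I would invoke Proposition \ref{p13} to identify the coefficients of $V_k$: writing $V_k=\sum_{\vec n\ge 1}a_{\vec n}\prod_{i}n_i^{-x_i}$, one has $a_{\vec n}=\prod_{j=1}^{k-1}\min(n_j,n_{j+1})$. Squaring and collecting the monomial $\prod_i n_i^{-x_i}$, each factor $n_i^{-x_i}$ splits as $v_i^{-x_i}(n_i/v_i)^{-x_i}$ with $v_i\mid n_i$, independently in every coordinate, so the coefficient of $V_k^2$ at $\vec n$ is
$$c_{\vec n}=\sum_{v_1\mid n_1}\cdots\sum_{v_k\mid n_k}a_{v_1,\dots,v_k}\,a_{n_1/v_1,\dots,n_k/v_k}.$$
The step I expect to need the most care is precisely this one: one must check that the multivariable convolution factors coordinate by coordinate, so that summing over the single tuple of divisors $(v_1,\dots,v_k)$ captures every contribution to the product with no repetition and no omission.

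Next I would substitute the explicit form of $a$. Using $a_{v_1,\dots,v_k}=\prod_{j}\min(v_j,v_{j+1})$ and $a_{n_1/v_1,\dots,n_k/v_k}=\prod_{j}\min(n_j/v_j,n_{j+1}/v_{j+1})$, the summand is
$$\Big(\prod_{j=1}^{k-1}\min(v_j,v_{j+1})\Big)\Big(\prod_{j=1}^{k-1}\min\big(\tfrac{n_j}{v_j},\tfrac{n_{j+1}}{v_{j+1}}\big)\Big)=\prod_{j=1}^{k-1}\min(v_j,v_{j+1})\,\min\big(\tfrac{n_j}{v_j},\tfrac{n_{j+1}}{v_{j+1}}\big),$$
the equality being the trivial regrouping of $2(k-1)$ factors into $k-1$ pairs. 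Summing over all divisor tuples then reproduces exactly the numerator $p_{\vec n}$ appearing in the definition of $P_k$, so that $c_{\vec n}=p_{\vec n}$ for every $\vec n$, and comparing coefficients of $\prod_i n_i^{-x_i}$ would give $P_k=V_k^{\,2}$.

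Finally, I would record the combinatorial meaning behind the algebra, since it explains why squaring is the right operation: in the proof of the preceding proposition a plateau of volume $n_i$ is pinned down by its height $v_i\mid n_i$ and its depth $n_i/v_i$, and a three-dimensional gluing of two adjacent plateaus splits as an independent gluing of heights and a gluing of depths. The first copy of $V_k$ then accounts for all height-gluings $\prod_j\min(v_j,v_{j+1})$ and the second for all depth-gluings $\prod_j\min(n_j/v_j,n_{j+1}/v_{j+1})$, which is exactly the content of the identity $P_k=V_k^2$.
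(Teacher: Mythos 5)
Your proposal is correct and follows essentially the same route as the paper: both recognize the numerator of $P_k$ as the coordinate-wise Dirichlet self-convolution of the coefficients $\prod_j \min(n_j,n_{j+1})$ of $V_k$, so that $P_k=V_k^2$. Your write-up is in fact more careful than the paper's one-line computation, since you explicitly justify that the multivariable convolution factors coordinate by coordinate and tie it back to the height/depth decomposition of plateaus.
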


\begin{proof}

\begin{align*}
P_k(x_1,x_2,...,x_k)&=\sum_{n_1,n_2,...,n_k\geq 1}\frac{\sum_{v_1|n_1,...,v_k|n_k}\prod_{j=1}^{k-1}min(v_j,v_{j+1})min(\frac{n_j}{v_j},\frac{n_{j+1}}{v_{j+1}})}{n_1^{x_1}n_2^{x_2}\cdots n_k^{x_k}}\\
&=\sum_{n_1,n_2,...,n_k\geq 1}\frac{\prod_{j=1}^{k-1}min(n_j,n_{j+1})}{n_1^{x_1}n_1^{x_1}n_2^{x_2}\cdots n_k^{x_k}}\sum_{n_1,n_2,...,n_k\geq 1}\frac{\prod_{j=1}^{k-1}min(n_j,n_{j+1})}{n_1^{x_1}n_1^{x_1}n_2^{x_2}\cdots n_k^{x_k}}.\\
\end{align*}
Thus we get the result.
\end{proof}

\section{Enumeration according to the height, width and depth}\label{Sect04}

As seen in Section \ref{Sect03}, the enumeration of parallelogram polycubes is related to the case of polyominoes. In fact, by definition of parallelogram polycubes in \cite{carre2015enumeration}, the projections on $(0,\vec{i},\vec{j})$ and $(0,\vec{i},\vec{k})$ of a parallelogram polycube gives two parallelogram polyominoes, both have the same width as the polycube. Also, the height of the first polyomino is equal to the height of the polycube and the height of the second one is equal to the depth of the polycube. Moreover, from each each pair of parallelogram polyominoes having the same width, we can build a unique parallelogram polycube. The example in Fig. \ref{f14} illustrates the projections of the previous polycube.

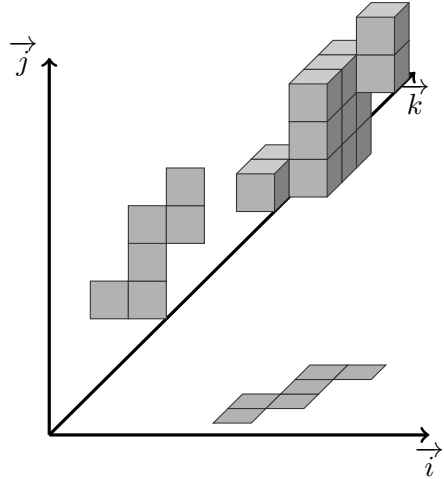
\begin{figure}[H]
\centering
\begin{tikzpicture}[scale=0.5]
\draw[very thick, ->] (2,0 ,13) -- (2,10,13) node[left]{$\overrightarrow{j}$};
\draw[very thick, ->] (2,0, 13) -- (2,0,-12) node[below]{$\overrightarrow{k}$};
\draw[very thick, ->] (2, 0,13) -- (12, 0,13) node[below]{$\overrightarrow{i}$};
	\foreach \x/\y/\z in {5/4/7,5/4/8,6/4/5,6/4/6,6/4/7,6/5/5,6/5/6,6/5/7,
								6/6/5,6/6/6,6/6/7,7/6/5,7/7/5}{
					\cube{\x}{\y}{\z};
				}
			\foreach \x/\y/\z in {0/0/5,1/0/5,1/1/5,1/2/5,2/2/5,2/3/5}{
					\freeza{\x}{\y}{\z};
				}		
				\foreach \x/\y/\z in {-1/-7/-8,-1/-7/-7,0/-7/-8,0/-7/-9,0/-7/-10,1/-7/-10}{
					\buu{\x}{\y}{\z};
				}
					
\end{tikzpicture}
\caption{Parallelogram polycube and its projections.}
\label{f14}
\end{figure}

In this section, we enumerate parallelogram polycubes according to the width, height and depth. We first start by giving in Lemma \ref{LPP}, the number of parallelogram polyominoes according to the width and height. Then from this result, we deduce the formula for the polycubes.  
\begin{lemma}\cite{DEUTSCH1999167}
\label{LPP}
Let $k,n \in \mathbb{N}$ and $g_{k,n}$ denote the number of parallelogram polyominoes of width $k$ and height $n$. Then,
$$g_{k,n}=\frac{1}{k+n-1}\binom{k+n-1}{k}\binom{k+n-1}{n}.$$
\end{lemma}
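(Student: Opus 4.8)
The plan is to read each parallelogram polyomino off its boundary as a pair of monotone lattice paths and then count such pairs with the Lindström--Gessel--Viennot (LGV) lemma. Concretely, write the $i$-th column as occupying rows $b_i$ through $t_i$; the parallelogram and connectivity hypotheses say $1=b_1\le\cdots\le b_k$, $t_1\le\cdots\le t_k=n$, and $b_{i+1}\le t_i$. I would encode such a polyomino by its lower boundary path $L$ (tracing the bottoms of the columns and then the right edge) and its upper boundary path $U$ (the left edge and then the tops), each going from $(0,0)$ to $(k,n)$ with unit north and east steps. One checks that $L$ and $U$ are well-defined NE-paths with $U$ weakly above $L$, and that the no-cut-point/connectivity condition is equivalent to $L$ and $U$ meeting only at the two endpoints. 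This gives a bijection between the polyominoes counted by $g_{k,n}$ and internally disjoint pairs $(L,U)$.

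The next step is to turn ``internally disjoint with common endpoints'' into plain vertex-disjointness so that LGV applies. Since $U$ lies above $L$ and they separate immediately, $U$ must leave $(0,0)$ by a north step and $L$ by an east step; dually, $U$ enters $(k,n)$ by an east step and $L$ by a north step. Deleting these four forced steps turns $U$ into a path from $(0,1)$ to $(k-1,n)$ and $L$ into a path from $(1,0)$ to $(k,n-1)$, and internal disjointness becomes vertex-disjointness of the two shortened paths. Because the source $(0,1)$ and sink $(k-1,n)$ of the upper path lie up-left of the source $(1,0)$ and sink $(k,n-1)$ of the lower path, the only non-crossing matching of sources to sinks is the identity, so LGV yields
\begin{equation*}
g_{k,n}=\binom{k+n-2}{k-1}^{2}-\binom{k+n-2}{k}\binom{k+n-2}{k-2}.
\end{equation*}
As a check, $k=n=2$ gives $4-1=3$.

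It then remains to verify the binomial identity
\begin{equation*}
\binom{k+n-2}{k-1}^{2}-\binom{k+n-2}{k}\binom{k+n-2}{k-2}=\frac{1}{k+n-1}\binom{k+n-1}{k}\binom{k+n-1}{n},
\end{equation*}
that is, that this Hankel-type difference of binomials is the Narayana number $N(k+n-1,k)$; here one also uses $\binom{k+n-1}{k-1}=\binom{k+n-1}{n}$. This is a routine manipulation (expand the left side with $\binom{m}{j}=\tfrac{m!}{j!(m-j)!}$, factor out $\binom{k+n-2}{k-1}^{2}$, and simplify), and it is exactly the classical determinantal formula for the Narayana numbers.

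The main obstacle is not the algebra but the bijection itself: I would need to argue carefully that the pair $(L,U)$ recovers the polyomino uniquely and that ``no interior common vertex'' corresponds precisely to the connectivity of the union of columns together with the increasing-bottoms-and-tops constraint (the failure case, where two columns touch only at a corner, is exactly the one producing a shared interior lattice point). The LGV application is then standard once the sources and sinks are oriented correctly. As a consistency check and an alternative route, the same count follows from the Delest--Viennot correspondence between parallelogram polyominoes and Dyck paths, under which the width $k$ becomes the number of peaks; since the Narayana numbers count Dyck paths of semilength $k+n-1$ by peaks, this gives the same formula.
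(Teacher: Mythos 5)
Your argument is correct, but note that the paper does not prove this lemma at all: it is quoted verbatim from the cited reference (Deutsch), where it arises from Dyck-path statistics --- under the Delest--Viennot correspondence, parallelogram polyominoes of semiperimeter $k+n$ map to Dyck paths of semilength $k+n-1$ with the width $k$ going to the number of peaks, so $g_{k,n}$ is the Narayana number $N(k+n-1,k)$. Your Lindstr\"om--Gessel--Viennot route is a legitimate, self-contained alternative, and all the pieces check out: the boundary encoding by two NE-paths from $(0,0)$ to $(k,n)$ meeting only at the endpoints is the standard one (the degenerate case $b_{i+1}=t_i+1$, two columns touching only at a corner, is exactly the case where the two paths share an interior lattice point, as you say); after deleting the four forced initial and terminal steps, the sources $(0,1)$, $(1,0)$ and sinks $(k-1,n)$, $(k,n-1)$ are arranged so that the non-identity matching forces an intersection of monotone paths, hence LGV gives the $2\times 2$ determinant $\binom{k+n-2}{k-1}^{2}-\binom{k+n-2}{k}\binom{k+n-2}{k-2}$; and this equals $\frac{1}{k+n-1}\binom{k+n-1}{k}\binom{k+n-1}{n}$ by the classical determinantal formula for Narayana numbers (checked e.g.\ at $k=n=2$ giving $3$ and $k=2$, $n=3$ giving $6$). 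What your approach buys is independence from the Dyck-path bijection; what it costs is having to verify the boundary-path bijection and the binomial identity by hand, both of which you correctly identify as the places requiring care.
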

From this lemma, the following theorem is deduced.
\begin{theorem}
Let $s_{k,n,m}$ be the number of parallelogram polycubes, of width $k$, height $n$ and depth $m$. Then for $k,n,m\geq 1$, we have,
$$s_{k,n,m}=\frac{nm}{k^2(n+k-1)(m+k-1)}\binom{n+k-1}{k-1}^2\binom{m+k-1}{k-1}^2$$
\end{theorem}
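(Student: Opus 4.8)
The plan is to exploit the projection bijection described just above the lemma: a parallelogram polycube is determined uniquely by the pair of parallelogram polyominoes obtained by projecting it onto the planes $(0,\vec{i},\vec{j})$ and $(0,\vec{i},\vec{k})$, and conversely every such pair sharing the same width reassembles into a unique polycube. Under this correspondence, a polycube of width $k$, height $n$ and depth $m$ corresponds exactly to a pair consisting of a parallelogram polyomino of width $k$ and height $n$ together with one of width $k$ and height $m$. Since the two projections can be chosen independently, this yields the product formula
$$s_{k,n,m} = g_{k,n}\, g_{k,m},$$
reducing the three-dimensional count to the planar one supplied by Lemma \ref{LPP}.

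The second step is to rewrite the Deutsch formula into a more symmetric factored shape. Starting from
$$g_{k,n} = \frac{1}{k+n-1}\binom{k+n-1}{k}\binom{k+n-1}{n},$$
I would use the elementary identities $\binom{k+n-1}{k} = \binom{k+n-1}{n-1}$ and $\binom{k+n-1}{n} = \binom{k+n-1}{k-1}$, and then extract factors from the factorials, to bring $g_{k,n}$ into the form
$$g_{k,n} = \frac{n}{k(n+k-1)}\binom{n+k-1}{k-1}^{2}.$$
Concretely, both the original and the factored expression expand to $\dfrac{\big((k+n-1)!\big)^{2}}{(k+n-1)\,k!\,(k-1)!\,n!\,(n-1)!}$, so the equality is a direct factorial verification rather than a genuine identity-hunting step.

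The final step is simply to multiply: substituting the factored form of $g_{k,n}$ and of $g_{k,m}$ into $s_{k,n,m}=g_{k,n}\,g_{k,m}$ produces
$$s_{k,n,m} = \frac{nm}{k^{2}(n+k-1)(m+k-1)}\binom{n+k-1}{k-1}^{2}\binom{m+k-1}{k-1}^{2},$$
which is the claimed identity.

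I expect the only genuinely delicate point to be the independence of the two projections, that is, justifying that \emph{every} pair of equal-width parallelogram polyominoes arises from a necessarily unique parallelogram polycube, so that the enumeration truly factors as a product; everything else is the routine factorial manipulation of the previous paragraph. Since the excerpt already asserts this reconstruction property from \cite{carre2015enumeration}, I would cite it directly rather than reprove it, and devote the bulk of the written argument to making the binomial rewriting of $g_{k,n}$ explicit before concluding by the product.
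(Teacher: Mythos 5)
Your proposal is correct and follows essentially the same route as the paper: both use the projection bijection to write $s_{k,n,m}=g_{k,n}\,g_{k,m}$ and then recast Deutsch's formula as $g_{k,n}=\frac{n}{k(n+k-1)}\binom{n+k-1}{k-1}^2$, the only cosmetic difference being that you verify this rewriting by direct factorial expansion while the paper derives it via the identity $\binom{n+k-1}{k}=\binom{n+k}{k}-\binom{n+k-1}{k-1}$.
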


\begin{proof}
As recalled,
for each pair of parallelogram polyominoes having the same width $k,$ we build a unique polycube of width $k$. If the height of the first one is $n$, then the height of the polycube is also $n$ and if the height of the second polyomino is $m$, it implies that the depth of the polycube $m$. From Lemma \ref{LPP}, the number of polyominoes of width $k$ and height $n$ is $\frac{1}{n+k-1}\binom{n+k-1}{k}\binom{n+k-1}{n}$, therefore by the rule of product, the formula is:
{\small
$$s_{k,n,m}=\frac{1}{(n+k-1)(m+k-1)}\binom{n+k-1}{n}\binom{n+k-1}{k}\binom{m+k-1}{m}\binom{m+k-1}{k}.$$
}%
Using the binomial identity \cite{quaintance2015combinatorial} 
{\small
$$\binom{n+k-1}{k}=\binom{n+k}{k}-\binom{n+k-1}{k-1},$$
}%
we obtain,
{\small
\begin{align*}
\frac{1}{(n+k-1)}\binom{n+k-1}{k}\binom{n+k-1}{k-1}&=\frac{1}{(n+k-1)}\Bigg[\binom{n+k}{k}-\binom{n+k-1}{k-1}\Bigg]\binom{n+k-1}{k-1}\\
&=\frac{1}{(n+k-1)}\Bigg[\frac{n+k}{k}\binom{n+k-1}{k-1}-\binom{n+k-1}{k-1}\Bigg]\binom{n+k-1}{k-1}\\
&=\frac{n}{k(n+k-1)}\binom{n+k-1}{k-1}^2.
\end{align*}
}%
By replacing in the formula, we get the result.
\end{proof}
Sequences of the OEIS \cite{OEIS} appear for some $k$, $n$ and $m$. The following table gives these cases.

\begin{table}[H]
\centering
\begin{tabular}{|c|c|}\hline
Sequence & Index OEIS\\ \hline
$s_{k,n,n}$& \href{https://oeis.org/A174158}{A174158}  \\ \hline
$s_{2,2,m}$& \href{https://oeis.org/A045943}{A045943}  \\ \hline
$s_{n,n,1}$& \href{https://oeis.org/A000891}{A000891}  \\ \hline
$a_l=\sum_{n+k=l}s_{k,n,n}$& \href{https://oeis.org/A319743}{A319743}  \\ \hline
\end{tabular}
\caption{Sequences of the OEIS that appear in the enumeration of parallelogram polycubes.}
\label{tseq}

\end{table}

For $k\geq 1$, let $$S_k(x,y)=\sum_{n,m\geq 1}p_{k,n,m}x^ny^m$$ be the generating function of parallelogram polycubes of width $k$ according to the height (coded by $x$) and the depth (coded by $y$).
To get the formula for the generating function, we use the following lemma.
\begin{lemma}\cite{DEUTSCH1999167}
Let $G_k(x)$ be the generating function of parallelogram polyominoes of width $k$ according to the height. Then for $k,m\geq 1$,
$$G_k(x)=\frac{B_{k-1}(x)}{(1-x)^{2k-1}},$$
where $B_{k}=\sum_{i=1}^{k-1}g_{i,k-1}x^{i}$ and $B_0=1$.
\end{lemma}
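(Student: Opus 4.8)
The plan is to prove the closed form directly from the explicit count in Lemma \ref{LPP}, pinning down the denominator by a degree argument and then identifying the numerator by induction on $k$.

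First I would simplify $g_{k,n}$. Cancelling the factor $k+n-1$ in the formula of Lemma \ref{LPP} and writing out the remaining factorials gives, for every $n$,
$$g_{k,n}=\frac{n\,(n+k-1)\prod_{j=1}^{k-2}(n+j)^2}{k!\,(k-1)!},$$
so that, for fixed $k$, the map $n\mapsto g_{k,n}$ is a polynomial in $n$ of degree exactly $2k-2$. By the standard theory of ordinary generating functions of polynomial sequences, summing such a sequence against $x^n$ yields a rational function whose only finite pole is at $x=1$, of order $(2k-2)+1=2k-1$; that is, $G_k(x)=N_k(x)/(1-x)^{2k-1}$ for a polynomial $N_k$ whose coefficients are the iterated forward differences of $(g_{k,n})_n$. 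This already delivers the announced denominator, and it remains to show $N_k=B_{k-1}$.

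For the numerator I would argue by induction on $k$, using the multiplicative recurrence obtained by dividing the product formulas for consecutive widths, namely $g_{k,n}=\frac{(n+k-1)(n+k-2)}{k(k-1)}\,g_{k-1,n}$. Writing $\theta=x\frac{d}{dx}$ for the operator acting by $\theta x^n=n x^n$, multiplication of the coefficient $a_n$ by $(n+c)$ corresponds to applying $\theta+c$, so the recurrence lifts to the operator identity
$$G_k(x)=\frac{1}{k(k-1)}\,(\theta+k-1)(\theta+k-2)\,G_{k-1}(x).$$
Since $\theta$ raises the order of the pole at $x=1$ by one each time it is applied, feeding the inductive hypothesis $G_{k-1}=B_{k-2}/(1-x)^{2k-3}$ into the right-hand side automatically reproduces a fraction with denominator $(1-x)^{2k-1}$ and a polynomial numerator. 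The base cases $G_1(x)=x/(1-x)$ and $G_2(x)=x/(1-x)^3$ are checked by hand.

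The main obstacle will be the final bookkeeping: carrying out $(\theta+k-1)(\theta+k-2)$ applied to $B_{k-2}/(1-x)^{2k-3}$, collecting terms over the common denominator $(1-x)^{2k-1}$, and recognising the resulting numerator as $B_{k-1}$. Concretely this amounts to showing that the numerators $N_k$ obey the same linear recurrence and initial data that define the $B_k$ — equivalently, that the coefficients $[x^i]N_k$ coincide with the values $g_{i,\,\cdot}$ appearing in $B_{k-1}$. I would settle this either by matching both sides coefficient by coefficient, using the identity $\binom{n+k-1}{k}=\binom{n+k}{k}-\binom{n+k-1}{k-1}$ already invoked in this section, or by passing to the bivariate series $\sum_k G_k(x)y^k$ and identifying it with the algebraic generating function of the numbers $g_{k,n}$ in closed form.
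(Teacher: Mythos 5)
First, a point of comparison: the paper gives no proof of this lemma at all --- it is imported verbatim from \cite{DEUTSCH1999167} --- so your attempt is necessarily a from-scratch reconstruction rather than a parallel of anything in the text. Your opening moves are sound and do establish the denominator: cancelling $k+n-1$ indeed gives $g_{k,n}=\frac{n(n+k-1)\prod_{j=1}^{k-2}(n+j)^2}{k!\,(k-1)!}$, a polynomial in $n$ of degree exactly $2k-2$, so $G_k(x)$ is rational with a single pole at $x=1$ of order $2k-1$; and the operator identity $G_k=\frac{1}{k(k-1)}(\theta+k-1)(\theta+k-2)G_{k-1}$ is a correct lift of the term ratio $g_{k,n}/g_{k-1,n}$.

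The genuine gap is that the actual content of the lemma is not the pole order but the identification of the numerator with the specific polynomial $B_{k-1}$, and that is precisely the step you defer to ``final bookkeeping'': you name two possible strategies (coefficient matching, or a bivariate series) without executing either, so the proof stops short of its target. Worse, had you pushed even one step further you would have found that the target identity cannot hold as printed: your own base case $G_1(x)=x/(1-x)$ already disagrees with $B_0/(1-x)=1/(1-x)$; with $B_k=\sum_{i=1}^{k-1}g_{i,k-1}x^i$ the polynomial $B_1$ is an empty sum while $G_2(x)=x/(1-x)^3\neq 0$; and a direct computation gives numerator $x+x^2$ for $G_3$ and $x+3x^2+x^3$ for $G_4$, whereas the printed $B_2=g_{1,1}x=x$ and $B_3=g_{1,2}x+g_{2,2}x^2=x+3x^2$. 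The correct numerators are the antidiagonal Narayana polynomials, e.g.\ $B_k(x)=\sum_{i=1}^{k}g_{i,k+1-i}\,x^{i}$, so any completion of your plan must first repair the statement of the lemma; a proof that claims to verify the base cases ``by hand'' while matching them against the printed $B_0$ and $B_1$ has not actually been checked.
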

\begin{theorem}\label{ceTh}
For $k\geq 1$,
$$S_k(x,y)=\frac{B_{k-1}(x)B_{k-1}(y)}{((1-x)(1-y))^{2k-1}}.$$
\end{theorem}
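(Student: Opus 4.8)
The plan is to leverage the bijective correspondence established at the start of Section \ref{Sect04}: a parallelogram polycube of width $k$ is uniquely determined by an \emph{ordered pair} of parallelogram polyominoes of width $k$, where the first polyomino records the projection onto $(0,\vec{i},\vec{j})$ (so its height equals the height of the polycube) and the second records the projection onto $(0,\vec{i},\vec{k})$ (so its height equals the depth of the polycube). This correspondence is a bijection, and crucially the two projections are chosen \emph{independently} once the common width $k$ is fixed. Therefore the generating function $S_k(x,y)$ counting polycubes of width $k$ by height (marked by $x$) and depth (marked by $y$) should factor as a product of two copies of the one-variable polyomino generating function.

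First I would make precise the combinatorial product structure. Writing $p_{k,n,m}$ for the number of parallelogram polycubes of width $k$, height $n$, depth $m$, and $g_{k,n}$ for the number of parallelogram polyominoes of width $k$ and height $n$ (as in Lemma \ref{LPP}), the bijection gives
\[
p_{k,n,m}=g_{k,n}\,g_{k,m}.
\]
Substituting this into the definition of $S_k(x,y)$ and separating the two independent summation indices yields a factorization:
\begin{align*}
S_k(x,y)=\sum_{n,m\geq 1}p_{k,n,m}\,x^n y^m
&=\sum_{n,m\geq 1}g_{k,n}\,g_{k,m}\,x^n y^m\\
&=\Bigl(\sum_{n\geq 1}g_{k,n}\,x^n\Bigr)\Bigl(\sum_{m\geq 1}g_{k,m}\,y^m\Bigr)
=G_k(x)\,G_k(y).
\end{align*}

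Next I would invoke the preceding lemma, which states $G_k(x)=B_{k-1}(x)/(1-x)^{2k-1}$, and apply it to each factor: $G_k(x)=B_{k-1}(x)/(1-x)^{2k-1}$ and $G_k(y)=B_{k-1}(y)/(1-y)^{2k-1}$. Multiplying the two expressions and combining the denominators as $\bigl((1-x)(1-y)\bigr)^{2k-1}$ gives exactly the claimed closed form
\[
S_k(x,y)=\frac{B_{k-1}(x)\,B_{k-1}(y)}{\bigl((1-x)(1-y)\bigr)^{2k-1}}.
\]

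The only genuine content — and hence the step I would treat most carefully — is the factorization $p_{k,n,m}=g_{k,n}g_{k,m}$, which rests entirely on the independence of the two projections. The algebra in the remaining steps is purely formal (separating a double sum into a product of single sums and substituting the lemma), so the crux is justifying that the map polycube $\mapsto$ (pair of projections) is a genuine bijection onto all pairs of width-$k$ parallelogram polyominoes: every such pair arises, and distinct polycubes give distinct pairs. This is precisely the correspondence asserted in the opening paragraph of Section \ref{Sect04} (attributed to the definition in \cite{carre2015enumeration}), so I would cite that directly rather than reprove it. Once that bijection is taken as given, the theorem follows immediately, and indeed this generating-function identity can be seen as the bivariate-weighted refinement of the numerical identity $s_{k,n,m}=g_{k,n}g_{k,m}$ already used in the proof of the previous theorem.
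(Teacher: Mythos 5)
Your proof is correct and follows essentially the same route as the paper: both rest on the factorization $p_{k,n,m}=g_{k,n}g_{k,m}$ coming from the projection bijection, separate the double sum into $G_k(x)G_k(y)$, and substitute the closed form for $G_k$ from the preceding lemma. Your write-up is in fact slightly more careful than the paper's, since it explicitly flags the bijection as the only step carrying real content.
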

\begin{proof}
\begin{align*}
S_k(x,y)&=\sum_{n,m\geq 1}s_{n,m,k}x^ny^m \\
&=\sum_{n,m\geq 1}g_{n,k}g_{m,k}x^ny^m \\
&=\bigg(\sum_{n\geq 1}g_{n,k}x^n\bigg)\bigg(\sum_{m\geq 1}g_{m,k}y^m \bigg)\\
&=G_k(x)G_k(y).
\end{align*}
Therefore we get the result.
\end{proof}

\section{Extension to any dimension}\label{Sect05}
In this section, the results found for polycubes will be extended for any dimension $d\geq 4$.

\subsection{Preliminaries}
In $\mathbb{Z}^d$, a \textit{polyhypercube} of dimension $d$, also called \textit{d-polycube}, is a finite union of cells (unit hypercubes), connected by their hypercubes of dimension $d-1$, and defined up to translation.

Let $(0, \vec{i_1}, \vec{i_2},..., \vec{i_d} )$ be an orthonormal coordinate system. The volume of polyhypercube is the number of its hypercubes. The \textit{width} is the difference between its greatest index and its smallest index according to $\vec{i_1}$. The \textit{$j^{th}$-height} is the difference between its greatest index and its smallest index according to $\vec{i_j}$, with $2\leq j \leq d$.\\
An elementary step is a positive move of one unit along the axis $\vec{i_j}$ with $1\leq j \leq d$. A polyhypercube is directed, if each cell can be reached from a distinguished cell called root, by a path only made by elementary steps.\\
The bottom (resp. top) according to $\vec{i_k}$, is the closet (resp. furthest) side of the hyperplan $(0, \vec{i_1},..., \vec{i_{k-1}},\vec{i_{k+1}},..., \vec{i_d} )$, with $k=2,...,d-1$. The bottom (resp. top) according to $\vec{i_d}$ is the closet (resp. furthest) side of the hyperplan $(0, \vec{i_1},..., \vec{i_{d-1}})$.
A \textit{stratum} is polyhypercube of width one. A \textit{plateau} is an hyperrectangular stratum without holes. A parallelogram polyhypercube is polyhypercubes whose bottoms and tops of its starata according to $\vec{i_k}$ form increasing sequences, $k=2,...,d$.\\
From this definitions we deduce that the projection of parallelogram polyhypercube on each plane gives a parallelogram polyomino.
Therefore, the results of the two previous sections can be extended to any dimension $d\geq 4$.
 
Let $P_{d,k}(x_1,x_2,...,x_k)$ be the generating function of parallelogram polhypercube of dimension $d$ and width $k$. Then, $P_{d,k}$ can be expressed according to $V_k$ the generating function of parallelogram polyominoes of width $k$.
 
\begin{theorem}
For $d\geq 4$ and $k\geq 1$,
$$P_{d,k}(x_1,x_2,...,x_k)=\big(V_k(x_1,x_2,...,x_k)\big)^{d-1}.$$
\end{theorem}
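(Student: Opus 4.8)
The plan is to generalize the argument of Theorem~\ref{thPolD}, replacing the square by a $(d-1)$-th power. The conceptual engine is the projection principle recalled in the preliminaries of this section: a parallelogram polyhypercube of dimension $d$ and width $k$ projects, onto each of the $d-1$ coordinate planes containing the $\vec{i_1}$ axis, to a parallelogram polyomino of width $k$, and conversely a $(d-1)$-tuple of parallelogram polyominoes sharing the common width $k$ reconstructs a unique parallelogram polyhypercube. First I would make this bijection explicit, as it is the analogue of the pairing used in the case $d=3$.

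Next I would describe a single plateau and the gluing count. A plateau of dimension $d$ has width $1$ and is a hyperrectangle whose $d-1$ transverse edge lengths $v_{i,1},v_{i,2},\ldots,v_{i,d-1}$ satisfy $n_i=v_{i,1}v_{i,2}\cdots v_{i,d-1}$, where $n_i$ is its volume. When the $j$-th plateau is glued to the $(j+1)$-th one, the parallelogram constraint forces increasing bottoms and tops along each transverse axis $\vec{i_2},\ldots,\vec{i_d}$ separately, so the number of admissible gluings is $\prod_{s=1}^{d-1}\min(v_{j,s},v_{j+1,s})$. Summing over all factorizations of each volume then gives, exactly as for $d=3$,
\[
p_{n_1,\ldots,n_k}=\sum_{\substack{v_{i,s}:\,\prod_{s}v_{i,s}=n_i}}\ \prod_{j=1}^{k-1}\prod_{s=1}^{d-1}\min(v_{j,s},v_{j+1,s}).
\]
Substituting this into the definition of $P_{d,k}$ and using the multiplicativity $n_i^{x_i}=\prod_{s=1}^{d-1}v_{i,s}^{x_i}$, the sum over every $n_i\geq 1$ together with every factorization becomes a free sum over all side lengths $v_{i,s}\geq 1$, and the whole Dirichlet series splits over the transverse index $s$:
\[
P_{d,k}(x_1,\ldots,x_k)=\prod_{s=1}^{d-1}\left(\sum_{v_{1,s},\ldots,v_{k,s}\geq 1}\frac{\prod_{j=1}^{k-1}\min(v_{j,s},v_{j+1,s})}{v_{1,s}^{x_1}\cdots v_{k,s}^{x_k}}\right).
\]
Each factor is, after renaming $v_{i,s}$ to $n_i$, precisely $V_k(x_1,\ldots,x_k)$, which yields the claimed identity.

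The step I expect to be the main obstacle is not the algebra but the clean justification that the gluing constraint decouples across the $d-1$ transverse directions, that is, verifying that the parallelogram condition (increasing bottoms and tops along each of $\vec{i_2},\ldots,\vec{i_d}$) imposes genuinely independent constraints, so that the count of gluings is an honest product of minima with no cross-terms between distinct axes. Once that decoupling and the multiplicativity of $n_i^{x_i}$ over the side lengths are established, the interchange of the sum over volumes with the sum over factorizations and the final product factorization are formal manipulations entirely parallel to the proof of Theorem~\ref{thPolD}.
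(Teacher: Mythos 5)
Your proposal is correct and follows essentially the same route as the paper, which simply asserts that the proof is identical to the $d=3$ case (Theorem~\ref{thPolD}) because the projections onto the $d-1$ coordinate planes give parallelogram polyominoes whose $i^{th}$ column heights multiply to the volume of the $i^{th}$ hyperplateau. Your write-up supplies the details the paper omits --- the per-axis gluing count $\prod_{s}\min(v_{j,s},v_{j+1,s})$ and the factorization of the Dirichlet series over the transverse index --- but the underlying decomposition is the same.
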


\begin{proof}
The proof is the same as the case of dimension $3$. It is deduced from the fact that the projection of a parallelogram polyhypercube on each plane gives a parallelogram polyomino. The volume of the $i^{th}$ hyperplateau is obtained by the product of the volumes of $i^{th}$ column of each polyomino obtained in the projection $1\leq i \leq k$.
\end{proof}

Let $s_{k,n_1,...,n_{d-1}}$ be the number of parallelogram polyhypercubes of dimension $d$, width $k$ and the $i^{th}$ height is equal to $n_i$, with $i=1...d-1$.
Using the same reasoning as for Theorem \ref{ceTh}, we obtain

\begin{theorem}
For $d\geq 4$, $k\geq 1$ and $n_i\geq 1$ with $i=1...d-1$,
$$s_{k,n_1,...,n_{d-1}}=\frac{1}{k^{d-1}}\prod_{i=1}^{d-1}\frac{n_i}{n_i+k-1}\binom{n_i+k-1}{k-1}^2.$$
\end{theorem}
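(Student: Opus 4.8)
The plan is to reduce the $d$-dimensional count to a product of two-dimensional counts via the projection principle stated just above, and then to simplify each factor using exactly the binomial manipulation already carried out in the three-dimensional theorem. The starting point is the observation recorded in this section: the projection of a parallelogram polyhypercube onto each of the $d-1$ planes $(0,\vec{i_1},\vec{i_j})$, for $2\le j\le d$, yields a parallelogram polyomino, all of the same width $k$; moreover the polyhypercube is uniquely reconstructible from this $(d-1)$-tuple of polyominoes, with the $i^{th}$ projection having height equal to the $i^{th}$ height $n_i$ of the polyhypercube. This is the direct generalization of the dimension-$3$ statement that a parallelogram polycube corresponds bijectively to a pair of parallelogram polyominoes, and it is the step I regard as the genuine content of the proof.

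Granting this bijection, the rule of product gives immediately
$$s_{k,n_1,\dots,n_{d-1}}=\prod_{i=1}^{d-1}g_{k,n_i},$$
since the $d-1$ projections can be chosen independently. First I would substitute the value of $g_{k,n_i}$ supplied by Lemma \ref{LPP}, namely $g_{k,n_i}=\tfrac{1}{n_i+k-1}\binom{n_i+k-1}{k}\binom{n_i+k-1}{n_i}$, and use the symmetry $\binom{n_i+k-1}{n_i}=\binom{n_i+k-1}{k-1}$ to put each factor in the form $\tfrac{1}{n_i+k-1}\binom{n_i+k-1}{k}\binom{n_i+k-1}{k-1}$.

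The next step is purely algebraic and is the same computation already performed in the proof of the theorem for $s_{k,n,m}$: applying the Pascal identity $\binom{n_i+k-1}{k}=\binom{n_i+k}{k}-\binom{n_i+k-1}{k-1}$ and the relation $\binom{n_i+k}{k}=\tfrac{n_i+k}{k}\binom{n_i+k-1}{k-1}$ rewrites each factor as
$$\frac{1}{n_i+k-1}\binom{n_i+k-1}{k}\binom{n_i+k-1}{k-1}=\frac{n_i}{k(n_i+k-1)}\binom{n_i+k-1}{k-1}^{2}.$$
Taking the product over $i=1,\dots,d-1$ then collects the factors $1/k$ into the prefactor $1/k^{d-1}$ and yields precisely the claimed expression $\tfrac{1}{k^{d-1}}\prod_{i=1}^{d-1}\tfrac{n_i}{n_i+k-1}\binom{n_i+k-1}{k-1}^{2}$.

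The main obstacle is therefore not the arithmetic, which merely repeats the $d=3$ case factorwise, but making the reconstruction-from-projections argument rigorous in arbitrary dimension: one must check that prescribing a parallelogram polyomino in each coordinate plane determines a unique parallelogram polyhypercube, and conversely that every such polyhypercube arises this way with the heights matching. I would argue this by describing the $i^{th}$ hyperplateau as the hyperrectangular cell whose extent in the directions $\vec{i_1}$ and $\vec{i_{j}}$ is fixed by the $i^{th}$ column of the corresponding projected polyomino, and by verifying that the increasing-sequence conditions on bottoms and tops of the strata hold in each direction precisely because they hold for each projected parallelogram polyomino; once this independence across the $d-1$ directions is established, the product decomposition follows.
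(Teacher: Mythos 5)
Your proposal is correct and follows essentially the same route as the paper, which likewise reduces the count to a product of $d-1$ independent parallelogram-polyomino projections via Lemma \ref{LPP} and then repeats the binomial simplification from the three-dimensional theorem factor by factor. In fact you supply more detail than the paper does (it only says ``using the same reasoning''), and your remark that the reconstruction-from-projections bijection is the step actually needing justification is well taken.
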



\bibliographystyle{acm}
\bibliography{Parl_Biblio}
\end{document}